\newcommand{\tran}{^{\top}}
\newcommand{\prob}{{\mathbb P}}
\newcommand{\inv}{^{-1}}
\newcommand{\beq}{\begin{equation}}
\newcommand{\eeq}{\end{equation}}
\newcommand{\bea}{\begin{eqnarray}}
\newcommand{\eea}{\end{eqnarray}}
\newcommand{\beas}{\begin{eqnarray*}}
\newcommand{\eeas}{\end{eqnarray*}}
\newcommand{\ba}{\begin{array}}
\newcommand{\ea}{\end{array}}
\newcommand{\bit}{\begin{itemize}}
\newcommand{\eit}{\end{itemize}}
\newcommand{\ben}{\begin{enumerate}}
\newcommand{\een}{\end{enumerate}}
\newcommand{\ped}[1]{ _{ {\mathrm{#1} } }}
\newcommand{\Real}[1]{ { {\mathbb R}^{#1} } }
\newcommand{\bR}{{\mathbb R}}
\newtheorem{theorem}{Theorem}[section]
\newtheorem{proof}{Proof}[section]
\newtheorem{algorithm}{Algorithm}[section]
\newtheorem{assumption}{Assumption}
\newtheorem{proposition}{Proposition}[section]
\newtheorem{remark}{Remark}[section]
\newcommand{\calA}{{\cal A}}
\newcommand{\calP}{{\cal P}}
\newcommand{\calV}{ {\cal V} }
\newcommand{\tbar}{{\bar{t}}}
\begin{document}
\title{Robust Model Predictive Control via Scenario Optimization\thanks{This manuscript is a preprint of a paper accepted for publication in the IEEE Transactions on Automatic Control, with DOI: 10.1109/TAC.2012.2203054, and is subject to IEEE copyright. The copy of record will be available at http://ieeexplore.ieee.org.}
\thanks{This research has received funding from the European Union Seventh Framework
Programme (FP7/2007-2013) under grant agreement n. PIOF-GA-2009-252284 -
Marie Curie project ``ICIEMSET'' and from the Italian Ministry of University and Research under PRIN 20087W5P2K grant.}}

\author{G.C. Calafiore\thanks{Dip. di Automatica e Informatica, Politecnico di Torino,
Italy. E-mail: giuseppe.calafiore@polito.it.}$\,$ and Lorenzo Fagiano\thanks{Dip. di Automatica e Informatica, Politecnico di Torino,
Italy, and  Dept. of Mechanical Engineering, University of California at Santa Barbara, USA. E-mail: lorenzo.fagiano@polito.it.}}

\maketitle
\noindent\textbf{{Abstract} --
This paper discusses   a novel probabilistic approach for the design of robust model predictive control (MPC) laws for discrete-time linear systems affected by   parametric uncertainty and additive disturbances. The proposed technique is based on the iterated solution, at each step, of a  finite-horizon optimal control problem (FHOCP)  that takes into account a suitable number of randomly extracted scenarios of uncertainty and disturbances, followed by a  specific command selection rule implemented in a receding horizon fashion.  The scenario FHOCP is always convex, also when the uncertain parameters and disturbance belong to non-convex sets, and irrespective of how the model uncertainty influences the system's matrices. Moreover, the computational complexity of the proposed approach does not depend on the uncertainty/disturbance dimensions, and scales quadratically with the control horizon.
The main result in this paper  is related to the analysis
of the closed loop system under receding-horizon implementation of the scenario FHOCP, and essentially states that
the devised control law guarantees constraint satisfaction at each step with some a-priori assigned probability $p$, while the system's state reaches the target set either  asymptotically, or in finite time with probability at least $p$.
The proposed method may be a valid alternative when other existing techniques, either deterministic or stochastic, are not directly usable due to excessive conservatism or to numerical intractability  caused by lack of convexity of the robust or chance-constrained optimization problem.}\\
$\,$\\

\section{Introduction}\label{S:intro}
In Model Predictive Control (MPC), at each sampling time $t$, a plant's  control input $u_t\in\mathbb{R}^m$ is computed by solving a constrained finite horizon optimal control problem (FHOCP), according to a receding horizon (RH) strategy, see, e.g., \cite{MRRS00}. MPC has received an ever-increasing attention in the last decades, mainly due to the possibility of taking into account input and state constraints explicitly in the control design. The study of robust MPC approaches, able to guarantee stability and constraint satisfaction also in the presence of uncertainty/disturbances, is still a very active research area. For the case of linear time invariant (LTI) discrete time models, an extensive literature has been developed, considering the presence of either model uncertainty or external disturbances, see, e.g.,  \cite{KoBM96}-\cite{CACJ??}.
Most of the existing approaches are deterministic and aim to optimize a worst-case performance index, while enforcing constraints for all possible outcomes of the uncertainty \cite{KoBM96}-\cite{MuBF06} or disturbance \cite{ScMa98}-\cite{GoKM06}. These techniques guarantee that the designed control law is able to cope with the considered uncertainty. However, they rely on the assumption of convexity of the optimization problem, not only with respect to the control input, but also with respect to the uncertain parameters and disturbances. Moreover, the computational complexity of deterministic approaches typically grows with the complexity of the model set. In a  recent and   active research direction, stochastic MPC techniques have also been studied, see, e.g., \cite{BeBe09}-\cite{CACJ??} and the references therein. Stochastic MPC techniques exploit some known statistical description of the uncertain parameters and/or of the disturbance (e.g., the  probability distribution, or the first and second moments), 
yet they still employ deterministic algorithms and, in order to maintain tractability of the optimization problem, they typically assume that the system matrices are either perfectly known, or they have a particular structure that preserves convexity.\\
We propose here a new randomized method for robust MPC design, which is able to deal with both model uncertainty and additive disturbances. Similar to stochastic MPC techniques, we exploit information on the statistics of the uncertain parameters and disturbances. However, we do not  assume convexity or even connectedness of the model set or of the disturbance set. Still, the optimization problem in our approach is always convex, and the control law is able to robustly enforce constraints and trajectory convergence, with a probability higher than a user-defined value $p$.
Furthermore, for a given value of $p$, the computational complexity of our approach is completely independent of the complexity of the model set.
The key point enabling to achieve these features is a shift of paradigm, from a \emph{deterministic} algorithm to a
\emph{randomized} one, i.e., an algorithm that relies on  random choices in the course of its execution (see, e.g., \cite{TeCD05}).
Indeed, a key step in our main algorithm (Algorithm~\ref{Alg:MPCS}) is the solution of a scenario FHOCP, in which
we do not consider all possible outcomes of uncertainty and disturbances, but only a finite number $M$
of randomly chosen instances of them,
named the ``scenarios.'' A randomized approach for MPC has been studied also in \cite{Batina2004,KaML09}, by using a Monte Carlo technique. However, Monte Carlo approaches may be very computationally demanding and can not handle in a straightforward way the presence of state constraints. Randomization has been used also in \cite{BOBW10}, in the context of chance-constrained MPC.
However, in \cite{BOBW10} there is no guideline on how to choose $M$ in order to have the guarantee that the probability of success is at least $p$ (which is instead one of the features of the present approach) and, moreover, the resulting optimization problem is a mixed-integer linear program.
On the contrary, the approach proposed here, named MPCS (MPC via Scenario optimization), exploits relatively recent results in Random Convex Programming (RCP, see \cite{CaCa05}-\cite{Cala10})
to provide an explicit link between $M$ and $p$. Moreover, we introduce a slack variable, the ``constraint violation,'' which renders the scenario FHOCP always feasible, and that can be used to monitor the extent of the (possible) violation of the involved constraints.
Further, we show how scenario optimization can be embedded in a  receding horizon scheme, in order to provide a feedback controller that gives probabilistic guarantees of robust stability and constraint satisfaction. The approach here proposed shall be particularly interesting in all those cases where the assumptions underpinning the existing deterministic or stochastic approaches for robust MPC are not met; for example, when the dependence of the system matrices on the uncertain parameters is not affine.

\section{Problem formulation}\label{S:problem}

Consider the following uncertain, discrete time LTI  model:
\begin{equation}\label{E:model}
x_{t+1}=A(\theta)x_t+B(\theta)u_t+B_\gamma(\theta)\gamma_t
\end{equation}
where $t\in\mathbb{Z}$ is the discrete time variable, $x_t\in\mathbb{R}^n$ is the system state, $u_t\in\mathbb{R}^m$ is the control input,  $\gamma_t\in\Gamma\subset\mathbb{R}^{m_\gamma}$ is an unmeasured disturbance vector,
$\theta\in\Theta\subseteq\mathbb{R}^g$ is the vector of uncertain parameters, and $A(\theta),\,B(\theta),\,B_\gamma(\theta)$ are matrices of suitable dimensions. Let us consider the following assumptions:

\begin{assumption}\label{A:unc_bound}(Uncertainty description)
The sets $\Gamma$ and $\Sigma\doteq\{A(\theta),\,B(\theta),\,B_\gamma(\theta):\theta\in\Theta\}$ are bounded.
We assume $\gamma_t$ and $\theta$ to have stochastic nature, and we
 let $\prob_\theta$ denote the probability measure
on $\Theta$, and $\prob_\gamma$ the probability measure
on $\Gamma$. Variables $\theta$ and $\gamma_t$ are independent.
Moreover,
$\gamma\doteq \{\gamma_0,\gamma_1,\ldots\}$ is an independent identically distributed (i.i.d.) sequence
and we let $\prob_\gamma^\infty$ denote the probability measure on this sequence.
\hfill$\blacksquare$
\end{assumption}

\begin{assumption}\label{A:syst_stab}(Robust stabilizability)
The pair $A(\theta),B(\theta)$ is stabilizable for any $\theta\in\Theta$.\hfill$\blacksquare$
\end{assumption}

The control problem is to regulate the system state to a neighborhood of the origin, subject to  (possibly uncertain) input and state constraints $x_t\in\mathbb{X}(\theta),\,u_t\in\mathbb{U}(\theta),\,\forall t$. The next assumption characterizes the constraint sets.

\begin{assumption}\label{A:constr}(State and input constraints)
For any $\theta\in\Theta$, the sets $\mathbb{X}(\theta)\subset\mathbb{R}^n$ and $\mathbb{U}(\theta)\subset\mathbb{R}^m$ are convex;
 they contain the origin in their interiors and they are representable by: $
\mathbb{X}(\theta)=\left\{x\in\mathbb{R}^n:f_X(x,\theta)\preceq0\right\},$\\$
\mathbb{U}(\theta)=\left\{u\in\mathbb{R}^m:f_U(u,\theta)\preceq0\right\}$,
where $\preceq$ denotes element-wise inequalities, each entry of the functions $f_X:\mathbb{R}^n\times\Theta\rightarrow\mathbb{R}^r,\,f_U:\mathbb{R}^m\times\Theta\rightarrow\mathbb{R}^q$ is convex in $x$ and $u$, respectively, and $r,\,q$ are suitable integers.
\hfill$\blacksquare$
\end{assumption}

The parameter $\theta$ has been included in the constraints to account for practical applications where, for example, a convex function of the states (e.g., energy, load) has to be limited below some threshold, and some parameter in the function or the threshold itself depend on uncertain physical quantities (e.g., maximal energy, breaking load).
Assumptions \ref{A:unc_bound} and \ref{A:constr} are quite mild, since $\Theta$ may be unbounded and of any form,  no assumption on $\Sigma,\,\Gamma$ is made except for boundedness, no restrictions on how the parameter $\theta$ influences matrices $A(\theta),\,B(\theta),\,B_\gamma(\theta)$  are imposed, as long as the system is stabilizable, and finally no assumption on the shape of the convex sets $\mathbb{X}(\theta),\,\mathbb{U}(\theta)$ (e.g., polytopic, ellipsoidal, ...) for given $\theta\in\Theta$ is made. Mixed constraints of the form $(x,u)\in\mathbb{X}_U(\theta)$, where $\mathbb{X}_U(\theta)\subseteq\mathbb{R}^n\times\mathbb{R}^m$ is a convex set, are not considered here for simplicity, but they can be straightforwardly included in our problem settings. Due to the presence of the generally non-zero unmeasured disturbance $\gamma_t$, regulation of the system state to the equilibrium $\overline{x}=0,\,\overline{u}=0$ can not be attained. Rather, we can require regulation  to a neighborhood of the origin, described by a terminal set, which is robustly positively invariant under a terminal control law.
\begin{assumption}\label{A:terminal_set}(Terminal set and terminal control law)
A convex set $\mathbb{X}_f$ and a linear terminal control law $u=K_f\,x,\,K_f\in\mathbb{R}^{m\times n}$, exist for system (\ref{E:model}), such that $
\mathbb{X}_f = \{x: f_{X_f}(x) \preceq 0\};\,
A(\theta)x+B(\theta)K_f\,x+B_\gamma(\theta)\gamma\in\mathbb{X}_f,\,\forall \theta\in\Theta,\, \forall \gamma\in\Gamma,\,\forall x\in\mathbb{X}_f$; finally
$f_X(x,\theta)\preceq0,\,f_U(K_f\,x,\theta)\preceq0,\,\forall\theta\in\Theta,\,\forall x\in\mathbb{X}_f$,
where $f_{X_f}:\mathbb{R}^n\rightarrow\mathbb{R}^l$ has convex components, and $l$ is a suitable integer.\hfill$\blacksquare$
\end{assumption}
A possible method for constructing a terminal set and a terminal control law satisfying Assumption~\ref{A:terminal_set} is to apply results in quadratic stability and rejection of bounded disturbances for uncertain LTI systems, see, e.g., \cite{Mao03,PNTN06} and the references therein. Moreover, there is a number of contributions in the literature concerned with the computation of approximations of the (minimal) robust positively invariant terminal set $\mathbb{X}_f$, see e.g. \cite{BlMi08,RaBa10} and the references therein. In the rest of this note, we parameterize the control input as:
\beq
u_t = K_f x_t + v_t,
\label{eq:feedbacku}
\eeq
where $K_f$ is the terminal control law of Assumption \ref{A:terminal_set} (which is assumed to be known and given), and $v_t$ is a control correction to be designed. Plugging (\ref{eq:feedbacku}) into (\ref{E:model}), we obtain the discrete-time model
\begin{equation}\label{E:modeltc}
x_{t+1}=A\ped{cl}(\theta)x_t+B(\theta)v_t+B_\gamma(\theta)\gamma_t,
\end{equation}
with $A\ped{cl}(\theta) = A(\theta) +B(\theta) K_f$, which will be the basis of our developments.

\section{The Scenario-based Finite-Horizon Optimal Control Problem}\label{SS:randomized-control}
Suppose that, at a given time instant $t$,  the state $x_t$
of system (\ref{E:modeltc}) is observed. We consider the problem of determining a corrective
control sequence on a  horizon of $N$  instants forward in time. To this end,  we build a randomized finite-horizon optimal control problem (FHOCP), as described next.
Let $N$ be the chosen horizon length, and let $v_{j|t}$, $ j=0,1,\ldots,N-1$, be the $N$ predicted control corrections to be applied to  (\ref{E:modeltc}),  from $t$ to $t+N-1$, given the knowledge of the state at time $t$.
From \eqref{eq:feedbacku}, the corresponding predicted control input sequence is
$u_{j|t} = K_f x_{j|t} + v_{j|t}$, $ j=0,1,\ldots,N-1$.
By using model (\ref{E:modeltc}),
we thus obtain the predicted values of the states as linear functions of the current state $x_t$, of the predicted (to-be-determined) control sequence
$\calV_t=[v_{0|t}\tran\;\cdots\; v_{N-1|t}\tran]\tran \in\Real{Nm}$, and of the disturbance sequence $\gamma$:
\beq
x_{j|t} = A\ped{cl}^j(\theta) x_t + \Phi_j(\theta)\calV_t + \Upsilon_j(\theta) \gamma,\quad
j=1,\ldots,N,
\label{eq-xpred}
\eeq
where  $\Phi_j(\theta)$, $\Upsilon_j(\theta)$ are suitable functions of the model matrices,
$A\ped{cl}(\theta)$, $B(\theta)$ and $B_\gamma(\theta)$.
However, the predictions obtained via model (\ref{E:modeltc}) are {\em uncertain}, since
they depend on $\theta$ and on $\gamma$. In our approach, we deal with this issue by considering
a discrete set of predicted state and input trajectories, obtained for a number $M$ of randomly extracted {\em scenarios}
of $\theta$ and $\gamma$ at time $t$. More precisely, let us collect these random  parameters in $\delta = (\theta,\gamma),\,\delta\in\Delta\doteq\Theta\times\Gamma^\infty$.
As a consequence of Assumption~\ref{A:unc_bound}, we have that
 $\delta$ has a probability measure that we denote with $\prob$, which is the product measure
of $\prob_\theta$ and the  measure $\prob_\gamma^\infty$ on $\gamma$:
$\prob = \prob_\theta \times \prob_\gamma^\infty$.
Consider then
 $M$ independent extractions $\delta^{(1)}_t,\ldots,\delta^{(M)}_t$ of $\delta$, constituting the
scenarios,
where each scenario has the probability distribution $\prob$, and let $\omega_t\doteq (\delta^{(1)}_t,\ldots,\delta^{(M)}_t)$ denote the ``multisample'' of scenario extractions at time $t$.
The probability distribution on $\omega_t$ is given by $\prob^M$.
Based on the random scenarios, we obtain $M$ different state and input predictions from (\ref{eq-xpred}), namely, for
$ i=1,\ldots,M $,
\beq\label{eq-xpredi}
\ba{lll}
x_{0|t}^{(i)} &=& x_t \\
x_{j|t}^{(i)} &=& A\ped{cl}^j(\theta^{(i)}_t) x_t + \Phi_j(\theta^{(i)}_t)\calV_t + \Upsilon_j(\theta^{(i)}_t) \gamma^{(i)}_t,\\&&
\quad j=1,\ldots,N,\\
u_{j|t}^{(i)} &=& K_fx_{j|t}^{(i)}+v_{j|t},
\quad j=0,\ldots,N-1,\\
\ea
\eeq
where $(\theta^{(i)}_t,\gamma^{(i)}_t) = \delta^{(i)}_t$.
Let us now introduce  the following  cost function:
\begin{equation}\label{E:cost_single}
J(x_t,\omega_t;\calV_t)\doteq
\max\limits_{i=1,\ldots,M}\left(
\sum\limits_{j=0}^{N-1}d(x_{j|t}^{(i)},\mathbb{X}_f)+
\sum\limits_{j=0}^{N-1}v_{j|t}^T \Lambda v_{j|t}\right),
\end{equation}
where $d(x,\mathbb{X}_f)$ is the distance between $x$ and the terminal set $\mathbb{X}_f$, computed in some norm $\|\cdot\|$: $d(x,\mathbb{X}_f)\doteq\min\limits_{y\in\mathbb{X}_f}\|x-y\|$, and $\Lambda=\Lambda\tran\succ0$ is a weighting matrix chosen by the control designer.  In the following, with a slight abuse of notation, we indicate the state and input constraint sets as $\mathbb{X}(\delta),\,\mathbb{U}(\delta)$, respectively, and the related convex functions in Assumption~\ref{A:constr} as $f_X(x,\delta)$, $f_U(u,\delta)$. Moreover, we transform the hard constraints of
Assumption~\ref{A:constr} into soft ones, by introducing a slack variable $q_t\in\bR,\,q_t\geq0$. Then,
the scenario-based FHOCP is a random convex program defined as follows:
\begin{subequations}\label{E:RFHOCP}
\beq
\calP (x_t,\omega_t):\;\min\limits_{\calV_t, z_t,q_t} \; z_t+\alpha q_t  \label{E:RFHOCP_0}
\eeq
\beq
\text{subject to}\nonumber
\eeq
\beq
J(x_t,\omega_t;\calV_t) \leq z_t\label{E:RFHOCP_1}
\eeq
\beq
f_X(x_{j|t}^{(i)},\delta^{(i)}_t)-\boldsymbol{1}q_t\preceq0;\;\; j=1,\ldots,N-1,\; i=1,\ldots,M \label{E:RFHOCP_2}
\eeq
\beq
f_U(u_{j|t}^{(i)},\delta^{(i)}_t)-\boldsymbol{1}q_t\preceq0;\;\; j=0,\ldots,N-1,\;
i=1,\ldots,M \label{E:RFHOCP_3}
\eeq
\beq
f_{X_f}(x_{t+N|t}^{(i)})-\boldsymbol{1}q_t\preceq0;\;\;i=1,\ldots,M. \label{E:RFHOCP_4}
\eeq
\beq
q_t\geq0\label{E:RFHOCP_5}
\eeq
\end{subequations}
In \eqref{E:RFHOCP_0}, the weighting scalar $\alpha>0$ is chosen by the control designer, and $\boldsymbol{1}$ denotes a column vector of appropriate length, containing all ones. We denote with  $\mathcal{V}_t^*(x_t,\omega_t)=\{v^*_{0|t},\ldots,v^*_{N-1|t}\},$ $z_t^*(x_t,\omega_t)$ and $q^*_t(x_t,\omega_t)$ an optimal solution to problem \eqref{E:RFHOCP}.
\begin{remark}\label{R:soft_constraints}(Worst-case cost and constraint violation)
Due to the presence of constraint \eqref{E:RFHOCP_1}, the value $z^*_t$ is an upper bound of the worst case cost with respect to all the $M$ extracted scenarios. We thus refer to $z^*_t$ as the ``worst-case cost.'' Moreover, we note that the use of the soft constraints \eqref{E:RFHOCP_2}-\eqref{E:RFHOCP_4} imply that  problem \eqref{E:RFHOCP} is always feasible.
In particular, by using a sufficiently high value of $\alpha$ (e.g. $10^4$ times higher than the typical value of $z^*_t$), the optimal value of $q^*_t$ turns out to be negligible whenever the problem with hard constraints (i.e., with $q_t$ set a priori to zero) is feasible. Contrary, when the problem with hard constraints is not feasible, the variable $q^*_t$ provides an indication on ``how much'' some of the constraints are violated. For this reason, we refer to $q^*_t$ as the ``constraint violation'' level. Finally, we note that there is no constraint violation in \eqref{E:RFHOCP_1}, i.e. $z^*_t$ is always greater than all the cost functions corresponding to the sampled scenarios, and in particular it is always an upper bound of the distance between the state $x_t$ and the terminal set $\mathbb{X}_f$ (see \eqref{E:cost_single}). This feature is important for our convergence result in Section~\ref{S:MPC_RCP_alg}.\hfill$\blacksquare$
\end{remark}
\begin{remark}\label{R:closed_loop}(Choice of cost function and input parameterization)
Prediction of the state trajectories in a closed loop fashion is quite common in the context of robust MPC, see e.g. \cite{ScMa98,OGJM09}. In particular, we adopt here the input parameterization (\ref{eq:feedbacku}), and we optimize over the control corrections $v_{j|t},\,j=0,\ldots,N-1$, i.e.\ over  $N\,m$ decision variables. Moreover, we chose as stage cost the distance between the state and the terminal set, plus a quadratic penalty on the control correction. Indeed, these choices of control parameterization and cost function are not meant to be the sole possibility, neither the optimal, for the proposed approach.
Generalization to  other kinds of input parameterization (e.g. disturbance-feedback \cite{GoKM06,OGJM09}) and cost function (like a standard quadratic stage cost) can be done with some technical modifications in the proofs of the results reported in this note.\hfill$\blacksquare$
\end{remark}
The optimization problem $\calP(x_t,\omega_t)$ can be rewritten in a more compact standard form.
By collecting the optimization variables $(\mathcal{V}_t,z_t,q_t)$ in  vector
$s_t\in\mathbb{R}^{mN+2}$, the cost  can be expressed as  $z_t+\alpha q_t=c\tran\,s_t$, where $c=[0,\ldots,0,1,\alpha]\tran$.
Moreover, it can be noted that, for any fixed value of $\delta_t$, due to linearity of
 (\ref{eq-xpredi}), the constraints (\ref{E:RFHOCP_1})-(\ref{E:RFHOCP_5}) are convex in the decision variable $s_t$ and in the state $x_t$.
Finally, these constraints can be formally expressed compactly as
$
h(s_t,x_t,\delta_t^{(i)})\leq 0$, for all $i=i,\ldots,M$,
where
$h:\mathbb{R}^{m N+2}\times\mathbb{R}^n\times\Delta\rightarrow\mathbb{R}$ is
defined as \small$
h(s_t,x_t,\delta_t^{(i)}) \doteq
\max
\left\{ \max\limits_{j=0,\ldots, N-1}\left\{f_X(x_{j|t}^{(i)},\delta_t^{(i)})-\boldsymbol{1}q_t, f_U(u_{j|t}^{(i)},\delta_t^{(i)})-\boldsymbol{1}q_t\right\}\right.$, $\left. f_{X_f}(x_{t+N|t}^{(i)})-\boldsymbol{1}q_t,-q_t ,J(x_t,\omega_t;\mathcal{V}_t)- z_t \right\}
$\normalsize.
Notice that $h(s_t,x_t,\delta_t^{(i)}) $ is  convex  in both $s_t$ and $x_t$, since it is the point-wise maximum of convex functions.
The scenario FHOCP can hence be rewritten as
\bea
\mathcal{P}(x_t,\omega_t): & \min\limits_{s_t} c\tran s_t \label{eq_S-FHOCP}\\
\text{subject to:} &  h(s_t,x_t,\delta_t^{(i)})\leq 0 ,\; i=1,\ldots,M.\nonumber
\eea
We denote with $s^*_t(x_t,\omega_t)=
(\mathcal{V}_t^*,z_t^*,q^*_t)$
 an optimal solution of $\mathcal{P}(x_t,\omega_t)$.
 Notice that, due to the way it has been defined, problem $\calP(x_t,\omega_t)$ is {\em always} feasible.
 We further assume that this problem always attains a unique optimal solution.

\subsection{Properties of the scenario FHOCP}

We now consider the following problem: suppose that, given the state $x_t$, we solve
problem $\calP(x_t,\omega_t)$. Then, we ask what is the probability  that the computed optimal control sequence $\mathcal{V}_t^*(x_t,\omega_t)$ is able to satisfy all state and input constraints over the chosen horizon, and to drive the state trajectory to the terminal set at the end of the horizon, within the computed optimal constraint violation $q_t^*$. Formally, this is the probability (with respect to $\delta$) with which  $h(s_t^*,x_t,\delta)\leq 0$,
where we notice that $h$ is now  evaluated at the optimal scenario solution $s_t^*$, and
the state and input trajectories that enter the definition of $h$ are the ``actual,'' uncertain, ones, obtained from
model (\ref{E:modeltc})  at a random $\delta=(\theta,\gamma)$. So, we define the {\em reliability} $R$ of the scenario-FHOCP as
\[
R \doteq \prob\{\delta:\, h(s_t^*,x_t,\delta)\leq 0\}.
\]
Notice  that $R\in[0,\,1]$ is itself a random variable, since it depends
on $s_t^*$, which in turn depends on the random multi-extraction of the scenarios $\omega_t$,
hence $R=R(\omega_t)$.
Indeed, for some extractions $\omega_t$ the reliability can be good (close to one), and for other
extractions it can be bad. It is therefore  critical to assess the a-priori likelihood of these two situations, that is to precisely quantify  bounds on the probability of the ``bad'' event where $\{R< p\}$, being $p$ some a-priori assigned level or desired reliability.
To this purpose, we exploit the fact that
  problem $\mathcal{P}(x_t,\omega_t)$ belongs to the class of so-called
Random Convex Programs (RCP) (see, e.g., \cite{CaCa05}-\cite{Cala10}) and, in particular, the result in Theorem 1 of \cite{CaGa08}, concerned with feasible random convex programs, applies to our context.
The following key result directly follows from Theorem 1 of \cite{CaGa08}, see also Theorem 3.3 in \cite{Cala10}.
\begin{theorem}
\label{thm-1}
Let $d=mN+2$ be the number of decision variables in problem $\calP(x_t,\omega_t)$,
let $p\in(0,\,1)$ be a given desired reliability level, let $\beta\in (0,1)$
be a given small probability level (say, $\beta=10^{-9}$),
and let $M$ be an integer such that
\beq
\Phi(p,d,M) \leq \beta,
\label{eq:beta}
\eeq
with $
\Phi(p,d,M)  \doteq \sum\limits_{j=0}^{d-1} \left(\ba{c}\!\!\! M \\ j \ea \!\!\!\right) (1-p)^j p^{M-j}.$
Then, it holds that
\beq
\prob^M\left\{
\omega_t: \,  R(\omega_t)\geq p
\right\}
\geq 1-\beta.
\label{eq:betabound}
\eeq
\hfill$\blacksquare$
\end{theorem}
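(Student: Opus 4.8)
\noindent\emph{Proof proposal.} The plan is to recognize that Theorem~\ref{thm-1} is simply an instance of the fundamental feasibility bound for Random Convex Programs, so that the real work consists of casting $\calP(x_t,\omega_t)$ into the canonical RCP form and then invoking the cited result. First I would fix the observed state $x_t$, which enters only as a parameter, and regard the problem purely as a convex program in the decision vector $s_t=(\calV_t,z_t,q_t)\in\Real{mN+2}$ with linear objective $c\tran s_t$ and scenario constraints $h(s_t,x_t,\delta_t^{(i)})\leq 0$, $i=1,\dots,M$. All the structural facts I need are already in place: $h$ is convex in $s_t$ (being a pointwise maximum of convex functions), the scenarios $\delta_t^{(1)},\dots,\delta_t^{(M)}$ are i.i.d.\ draws from $\prob$, the problem is always feasible by construction of the slack $q_t$, and it attains a unique optimizer $s_t^*$ by assumption. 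Hence the number of decision variables is exactly $d=mN+2$, and the reliability $R(\omega_t)$ is one minus the violation probability of the scenario optimum.

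Second, I would translate the ``reliability'' language into the ``violation'' language of RCP theory. Writing $V(\omega_t)\doteq 1-R(\omega_t)=\prob\{\delta:\,h(s_t^*,x_t,\delta)>0\}$, the event $\{R(\omega_t)\geq p\}$ coincides with $\{V(\omega_t)\leq 1-p\}$. The feasibility theorem for RCPs (Theorem~1 of \cite{CaGa08}, Theorem~3.3 of \cite{Cala10}) asserts that, for a feasible convex scenario program in $d$ variables admitting a unique minimizer, the violation of that minimizer satisfies $\prob^M\{V(\omega_t)>\varepsilon\}\leq\sum_{j=0}^{d-1}\binom{M}{j}\varepsilon^{\,j}(1-\varepsilon)^{M-j}$ for every $\varepsilon\in(0,1)$. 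I would apply this with $\varepsilon=1-p$.

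Third, the substitution $\varepsilon=1-p$ gives $\prob^M\{V(\omega_t)>1-p\}\leq\sum_{j=0}^{d-1}\binom{M}{j}(1-p)^{\,j}p^{M-j}=\Phi(p,d,M)$. Passing to complements, $\prob^M\{R(\omega_t)\geq p\}=\prob^M\{V(\omega_t)\leq 1-p\}\geq 1-\Phi(p,d,M)$, and the hypothesis $\Phi(p,d,M)\leq\beta$ immediately yields the claimed bound $\prob^M\{R(\omega_t)\geq p\}\geq 1-\beta$.

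The main obstacle is not the final algebra but the verification that the \emph{clean} binomial bound --- with the number of support constraints controlled by exactly $d=mN+2$ --- is legitimately available here. This means confirming that the hypotheses of the cited RCP theorem genuinely hold: feasibility (immediate), uniqueness of the optimizer (assumed, and otherwise enforceable by a tie-breaking rule, e.g.\ a lexicographic ordering, as in the RCP literature), and the nondegeneracy ensuring that the number of support constraints never exceeds $d$. I would also be careful about measurability of the map $\omega_t\mapsto R(\omega_t)$ and about the product-measure structure $\prob=\prob_\theta\times\prob_\gamma^\infty$ inherited from Assumption~\ref{A:unc_bound}, so that $\prob^M$ is well defined on the multisample and the i.i.d.\ requirement of the scenario theorem is satisfied.
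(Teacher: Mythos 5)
Your proposal is correct and follows exactly the route the paper takes: the paper proves Theorem~\ref{thm-1} simply by observing that $\calP(x_t,\omega_t)$ is a feasible random convex program in $d=mN+2$ variables with a unique optimizer, and invoking Theorem~1 of \cite{CaGa08} (Theorem~3.3 of \cite{Cala10}) with the violation level $\varepsilon=1-p$. Your write-up merely makes explicit the reduction and the hypothesis-checking that the paper leaves implicit, so there is nothing to add.
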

\begin{remark}(Number of scenarios and ``certainty equivalence'')
The practical importance of the result in Theorem~\ref{thm-1} stems from the fact that the number $M$ of scenarios necessary to fulfill condition (\ref{eq:beta}) grows mildly with the inverse of $\beta$.
More precisely, Corollary 5.1 in \cite{Cala10} states that condition (\ref{eq:beta}) is implied by
$M \geq \frac{2}{1-p}\left( \ln \beta\inv   +d\right)$, thus $M$ grows at most logarithmically with $\beta\inv$ (tighter values of $M$ for given $\beta$ and $p$ can be obtained by inverting numerically \eqref{eq:beta}). This means in turn that the
parameter $\beta$ may be fixed by the designer to a very low level, say $\beta = 10^{-9}$, or even
$\beta = 10^{-12}$, and still the number $M$ of scenarios necessary to guarantee (\ref{eq:betabound}) remains manageable. The parameter $\beta$ hence measures the probability of the unfortunate event in which the optimal solution has  reliability smaller than the desired level $p$. If the likelihood of such an event is bounded a priori by an extremely low value, such as
$\beta = 10^{-12}$, then we may safely say that, to all practical engineering purposes, the event
$\{R(\omega_t)\geq p\}$
is the ``certain'' event. In other words, the possibility that
$\{R(\omega_t)\geq p\}$
is not satisfied by the scenario problem is so remote
that, before having any concern about it, the designer should better verify the validity of many other assumptions and approximations in the model. Such a ``certainty equivalence'' principle, which will be adopted henceforth in this paper,
essentially eliminates from consideration the ``outer'' probability level in (\ref{eq:betabound}), and states with practical certainty
(the expression ``with practical certainty'' shall be used in the rest of this note as a synonym  of ``with probability larger than $1-\beta$,'' where $\beta>0$ is some extremely small value) that
$\{R(\omega_t)\geq p\}$ holds.
This simplifies greatly  the practical application of scenario techniques,
and makes the whole approach more clear and understandable
by both theoreticians and control practitioners.\hfill$\blacksquare$
\end{remark}
The properties of the scenario FHOCP are resumed in the following proposition.
\begin{proposition}(Finite horizon robustness)\label{T:open_loop_stable}
Given the state
$x_t$ of system (\ref{E:modeltc}) at time $t$, consider the scenario problem
$\calP(x_t,\omega_t)$ as an instrument to derive a finite-horizon control sequence
$\calV_t^* = \{v^*_{0|t},\ldots,v^*_{N-1|t}\}$ to be applied to the system  (\ref{E:modeltc})
at the subsequent instants $t,t+1,\ldots, t+N-1$.
Let the number $M$ of scenarios in problem $\calP(x_t,\omega_t)$ be chosen so to satisfy (\ref{eq:beta})
for given reliability level  $p\in(0,\,1)$ and very small $\beta\in (0,1)$.
Then, with practical certainty  it holds that the computed control sequence:\\
\textbf{a)}
  steers the state of system (\ref{E:modeltc}) to the terminal set $\mathbb{X}_f$ in $N$ steps with probability at least $p$ and constraint violation $q^*_t$, i.e.: $\prob\{\delta: \, f_{X_f}(x_{t+N},\delta)-\boldsymbol{1}q_t^* \preceq 0\}\geq p$;\\
\textbf{b)} Satisfies all state  constraints with probability at least $p$ and constraint violation $q^*_t$, i.e.: $
  \prob\{\delta:\, f_{X}(x_{t+j},\delta)-\boldsymbol{1}q_t^* \preceq 0,\,\forall j\in[1,N]\}\geq p$\\
\textbf{c)}
Satisfies all input  constraints with probability at least $p$ and constraint violation $q^*_t$, i.e.: $
\prob\{\delta: \, f_{U}(u^*_{t+j},\delta )-\boldsymbol{1}q_t^*\preceq0,\,\forall j\in[0,N-1]\}\geq p.$\hfill$\blacksquare$
\end{proposition}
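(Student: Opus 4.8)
The plan is to show that Proposition~\ref{T:open_loop_stable} is essentially a repackaging of Theorem~\ref{thm-1}, once we correctly interpret the compact constraint function $h$. The central observation is that the single scalar inequality $h(s_t^*,x_t,\delta)\leq 0$ encodes \emph{simultaneously} all three events in parts (a), (b), (c): by the definition of $h$ as a point-wise maximum, $h(s_t^*,x_t,\delta)\leq 0$ holds if and only if $f_X(x_{j|t},\delta)-\boldsymbol{1}q_t^*\preceq 0$ for all $j$, $f_U(u_{j|t},\delta)-\boldsymbol{1}q_t^*\preceq 0$ for all $j$, $f_{X_f}(x_{t+N|t})-\boldsymbol{1}q_t^*\preceq 0$, and $-q_t^*\leq 0$, where now the predicted trajectories $x_{j|t},u_{j|t}$ are the \emph{actual} ones generated by model~(\ref{E:modeltc}) under a fresh random $\delta=(\theta,\gamma)$ and the fixed optimal correction sequence $\calV_t^*$. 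Thus the reliability event $\{R(\omega_t)\geq p\}$ asserts precisely that, with probability at least $p$ over $\delta$, \emph{all} of these conditions hold jointly.

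First I would invoke Theorem~\ref{thm-1}: since $M$ is chosen to satisfy~(\ref{eq:beta}), we have $\prob^M\{\omega_t:R(\omega_t)\geq p\}\geq 1-\beta$, i.e. with practical certainty the realized multisample yields an optimal solution $s_t^*$ for which $\prob\{\delta:h(s_t^*,x_t,\delta)\leq 0\}\geq p$. Second, I would unpack this inner probability statement using the point-wise-maximum structure of $h$ just described. The key logical step is monotonicity of probability under event inclusion: for each individual family of constraints, the event that \emph{that} family is satisfied is a \emph{superset} of the event $\{h\leq 0\}$ that \emph{all} families are satisfied simultaneously. Concretely, $\{\delta:f_{X_f}(x_{t+N},\delta)-\boldsymbol{1}q_t^*\preceq 0\}\supseteq\{\delta:h(s_t^*,x_t,\delta)\leq 0\}$, and likewise for the state-constraint event in (b) and the input-constraint event in (c). Hence each of these larger events has probability at least $\prob\{\delta:h\leq 0\}\geq p$, which is exactly the three claimed inequalities.

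The one point requiring care is notational alignment between the scenario predictions~(\ref{eq-xpredi}) and the ``actual'' trajectories appearing in (a)--(c). In the proposition, $x_{t+j}$ and $u_{t+j}^*$ denote the true closed-loop quantities obtained by applying the fixed optimal corrections $v_{j|t}^*$ to~(\ref{E:modeltc}) under the single realized $\delta$, whereas $h$ was originally defined through the scenario index $i$. I would therefore make explicit that evaluating $h(s_t^*,x_t,\delta)$ at a generic $\delta$ reuses the same linear prediction maps $\Phi_j(\theta),\Upsilon_j(\theta),A\ped{cl}^j(\theta)$ from~(\ref{eq-xpred}), now instantiated at $\delta$ rather than at a scenario $\delta_t^{(i)}$, so that $x_{j|t}$ evaluated at $\delta$ coincides with $x_{t+j}$ and $u_{j|t}$ with $u_{t+j}^*$. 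Once this identification is stated, parts (a), (b), (c) follow immediately by reading off the corresponding block of the maximum defining $h$.

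I expect the main obstacle to be purely expository rather than mathematical: the substantive probabilistic content is entirely inherited from Theorem~\ref{thm-1}, so the proof is essentially a careful decomposition argument. The only genuine subtlety is ensuring the reader sees that the \emph{joint} reliability bound of Theorem~\ref{thm-1} implies the three \emph{individual} marginal bounds via event inclusion (not the reverse), and that the terminal-set containment in part (a) uses the same fixed $q_t^*$ as the running constraints. I would keep the argument short, emphasizing the point-wise-maximum identity for $h$ and the monotonicity of $\prob$, and flag that the bounds in (a)--(c) are in fact simultaneous with probability at least $p$ — a strictly stronger statement that the decomposition makes transparent.
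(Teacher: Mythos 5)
Your proposal is correct and follows essentially the same route as the paper, which simply observes that Theorem~\ref{thm-1} gives, with practical certainty, $\prob\{\delta: h(s_t^*,x_t,\delta)\leq 0\}\geq p$, and that this implies \textbf{a)}--\textbf{c)}; you merely make explicit the event-inclusion/monotonicity step and the identification of the prediction maps evaluated at a generic $\delta$ with the actual trajectories, which the paper leaves implicit. (One small imprecision: $h$ also contains the term $J(x_t,\omega_t;\calV_t)-z_t$, so your ``if and only if'' decomposition of $\{h\leq 0\}$ is not quite exact, but only the inclusion $\{h\leq 0\}\subseteq\{\text{each family satisfied}\}$ is needed, and that direction is unaffected.)
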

The proof of this result follows  immediately from Theorem~\ref{thm-1}: eq. (\ref{eq:betabound}) states  that,
with practical certainty,
the optimal solution $s^*_t$ of the scenario
problem satisfies $h(s_t^*,x_t,\delta)\leq  0$ with
probability at least $p$,
which indeed implies  that points \textbf{a)}-\textbf{c)} in the corollary hold.
\begin{remark}\label{R:feasibility}(Relationship with deterministic approaches)
In a deterministic approach to robust MPC, a problem similar to \eqref{E:RFHOCP} has to be solved \emph{for all possible values} of $\delta\in\Delta$. When the problem is convex with respect to $\delta$ (which happens, for instance, when the uncertain matrices and/or the additive disturbance belong to polytopes \cite{KoBM96,LCRM04}), deterministically robust approaches are indeed well-established and shall be preferred to the scenario approach, especially if deterministic robustness is critical in the considered application. In all other cases, deterministic approaches are generally intractable, unless the problem is manipulated  so to satisfy convexity assumptions, at the cost of higher conservativeness and reduced feasibility. In  these situations, the scenario approach proposed here is a viable alternative to deterministic techniques, since it is always convex and it can be efficiently solved also with a large number of samples, while still giving probabilistic guarantees on the robustness of the solution, as it will be shown in the example section.
Moreover, we note that, if a sufficiently  high weight $\alpha$  is used in the objective (see Remark \ref{R:soft_constraints}), then the scenario problem typically returns a negligible optimal constraint violation $q_t^*$, whenever the problem would be feasible
with $q_t$ set to zero.
Furthermore, since the scenario FHOCP has only a subset of the constraints of a corresponding
deterministically robust FHOCP, the violation level $q_t^*$ of the scenario problem will always be  lower
than the violation level of the deterministic version of the same problem
(and this fact holds independently of whether the deterministic problem can be solved numerically or not).
In any case, the  value of $q^*_t$ gives an indication on the extent of the violation of the involved constraints, which can be used, e.g., to implement supervisory control strategies and recovery actions.
\hfill$\blacksquare$
\end{remark}
The remaining part of this note is devoted to analyzing what happens when a scenario FHOCP is solved repeatedly
in time and used to control the plant in a receding-horizon fashion.
In a receding-horizon approach, which is the key feature of MPC, only the first control correction in the optimal sequence $\calV^*_t$ is applied at time $t$, and then the FHOCP is solved again at time $t+1$, by exploiting the knowledge of the state $x_{t+1}$, etc. In the next section, we propose a technique for incorporating the scenario FHOCP into a suitable receding-horizon scheme, and we derive  probabilistic guarantees of asymptotic convergence and constraint satisfaction for the resulting closed-loop system.

\section{MPC scheme based on Scenario optimization}\label{S:MPC_RCP_alg}
We here introduce a  receding-horizon implementation of a control algorithm based on
the scenario FHOCP, as described next. The notation is set as follows: ``$*$'' variables, such as $z_t^*, q^*_t, \calV_t^*=\{v_{0|t}^*,\ldots,v_{N-1|t}^*\}$,
denote the optimal solution of the scenario optimization problem $\calP(x_t,\omega_t)$ at time $t$, given $x_t$; ``$\sim$'' variables,  $\tilde z_t,\tilde q_t,\tilde \calV_t$, denote, respectively, two  scalar values and a  sequence
of $N$ vectors of dimension $m$, as defined in the algorithm below; finally plain variables, $z_t,q_t,\calV_t$, denote the running values of the variables $z,\,q$ and of the sequence
 $\calV=\{ v_{0|t},\ldots,v_{N-1|t}\}$ in the algorithm. The first entry in $\calV_t$, namely $v_{0|t}$, is the actual control correction that is applied to the system (\ref{E:modeltc}) at time $t$. The subsequence composed by the last $N-1$ elements of
$\calV_t$ is denoted with $v_{1:N-1|t}$.
We are now in position to describe the algorithm for MPC based on Scenario optimization (MPCS).
\begin{algorithm}(MPCS algorithm)\ \\
\label{Alg:MPCS}
\noindent
\textbf{(Initialization)}
Choose a desired reliability level  $p\in (0,1)$ and ``certainty equivalence'' level $\beta \in(0,1)$
(say, $\beta=10^{-9}$, or  $\beta=10^{-12}$).
Let $M$ be an integer satisfying
(\ref{eq:beta}).
Choose  $\varepsilon\in(0,1]$ (see Remark~\ref{R:comments_4} below for the meaning of $\varepsilon$ and for guidelines on its choice). Given
an initial state $x_{0}$,
extract $\omega_0$ according to $\prob^{M}$,
 solve problem
$\mathcal{P}_{M}(x_{0},\omega_0)$ and obtain
the optimal control sequence $
\mathcal{V}_{0}^* = \{v_{0|0}^*,v_{1|0}^* \ldots, v^*_{N-1|0}\},$ and the optimal objective $z^*_{0}$ and constraint violation $q^*_0$.
Set $z_0 = z^*_{0},q_0=q^*_0$, $\mathcal{V}_{0}= \mathcal{V}_{0}^*$, and apply to the system the control action
$u_0=K_f x_0+ v_{0|0}$.\\
\textbf{1)} Let $t:=t+1$, observe $x_t$, and set $\tilde{\mathcal{V}}_t= \{v_{1|t-1},\ldots,v_{N-1|t-1},0\}$ = $\{v_{1:N-1|t-1}, 0\},$\\ $\tilde{z}_t = \max\left(0,z_{t-1} -d(x_{t-1},\mathbb{X}_f)\right), \,\tilde q_t = q_{t-1}$;\\
\textbf{2)} Extract the multi-sample $\omega_t$ according to $\prob^{M}$, and solve problem
  $\mathcal{P}_{M}(x_t,\omega_t)$. Let  $(\calV_t^*,z_t^*,q^*_t)$
  be the obtained  optimal solution.\\
\textbf{3)} Evaluate the following collectively exhaustive and mutually exclusive cases:\\
\textbf{3.a)}
If   $z_t^* > \left(z_{t-1}-\varepsilon d(x_{t-1},\mathbb{X}_f)\right)$
{and}
$\tilde{z}_t<d(x_t,\mathbb{X}_f)$,
{then} set $ \calV_{t} = \tilde{\mathcal{V}}_t;\quad
  z_{t} =  0;\quad q_t=\tilde{q}_t$;\\
\textbf{3.b)} {If}   $z_t^* > \left(z_{t-1}-\varepsilon d(x_{t-1},\mathbb{X}_f)\right)$ {and} $\tilde{z}_t\geq d(x_t,\mathbb{X}_f)$,  {then} set
$\calV_{t} = \tilde{\mathcal{V}}_t;\quad
  z_{t} =  \tilde{z}_t;\quad q_t=\tilde{q}_t$;\\
\textbf{3.c)} {If}   $z_t^* \leq \left(z_{t-1}-\varepsilon d(x_{t-1},\mathbb{X}_f)\right)$, {then}
set $\calV_{t} = \calV^*_{t};\quad
  z_{t} =  {z}^*_t;\quad q_t=q^*_t$;\\
\textbf{4)} Apply the control input $u_t=K_f\,x_t+ v_{0|t}$,  then go to 1).\hfill$\blacksquare$
\end{algorithm}
\begin{remark}\label{R:comments_4} The inequality
 $z^*_t\leq \left(z_{t-1}-\varepsilon d(x_{t-1},\mathbb{X}_f)\right)$, checked at step 3) of the MPCS algorithm, can be interpreted as a verification of a required minimum improvement, in terms of worst-case cost, achieved by the newly computed optimal solution $(\mathcal{V}^*_t,z^*_t,q^*_t)$ of the scenario problem at time step $t$, with respect to the previous step. The user-defined parameter $\varepsilon\in(0,1]$ influences such a requirement:
the closer the value of $\varepsilon$ is set to 0, the more likely it is that case $z^*_t\leq \left(z_{t-1}-\varepsilon d(x_{t-1},\mathbb{X}_f)\right)$ is met, so that the MPCS algorithm relies, at each time step, on the newly computed optimal solution. Vice-versa, the closer is the value of $\varepsilon$ to 1, the more likely it is that the complementary condition $z^*_t> \left(z_{t-1}-\varepsilon d(x_{t-1},\mathbb{X}_f)\right)$ is detected, so that the MPCS algorithm employs the previously computed solution.
\hfill$\blacksquare$
\end{remark}
The next results is concerned with the guaranteed properties, in terms of constraint satisfaction and convergence to the terminal set, of the closed loop system obtained by applying Algorithm \ref{Alg:MPCS}.
\begin{theorem}(Properties of Scenario MPC)\label{T:closed_loop_stable}
Let Assumptions \ref{A:unc_bound}-\ref{A:terminal_set} be satisfied and let
$p\in(0,1)$ be a chosen reliability level. Let $v_{0|t}$, $t=0,1,\ldots$ denote the sequence of control actions produced by
Algorithm~\ref{Alg:MPCS}, and consider the
 closed loop system  obtained by applying   to (\ref{E:model}) the control law $
 u_t = K_fx_t + v_{0|t}$. Then:\\
\textbf{(a)} With practical certainty, at  all time steps $t=0,1,\ldots$,
  the probability that
  the state and input constraints are satisfied with constraint violation $q_t$ is at least  $p$, that is $\prob\{\delta: f_X(x_{t+1},\delta) -\boldsymbol{1}q_t\preceq 0 \; \cap \;   f_U(u_{t},\delta)-\boldsymbol{1}q_t\preceq 0     \} \geq p,\quad
  t=0,1,\ldots$\\
\textbf{(b)}
  Algorithm~\ref{Alg:MPCS} either:
\textbf{(i)} makes the state trajectory converge asymptotically to the terminal set, i.e. $\lim\limits_{t\to\infty} d(x_t,{\mathbb X}_f) = 0$, or \textbf{(ii)} there exists a finite time $t^*$ such that, with practical certainty,  the control sequence
  $\{v_{0|t^*}, v_{0|t^*+1},\ldots v_{0|t^*+N-1}\}$ drives the state of the  closed-loop system to the terminal set
  at time $t^*+N-1$, with probability at least $p$ and constraint violation $q_{t^*}$.\hfill$\blacksquare$
  \end{theorem}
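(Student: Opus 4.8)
The plan is to prove the two assertions separately: part (a) will follow from the finite-horizon certificate of Proposition~\ref{T:open_loop_stable}, while part (b) will follow from a monotonicity argument on the running worst-case cost $z_t$ together with a dichotomy on the branching logic of Algorithm~\ref{Alg:MPCS}. The single device that makes both work is the observation that, in the ``shift'' branches 3.a and 3.b, the closed loop is applying, in open loop, the tail of a previously optimized sequence, so that the horizon-$N$ scenario guarantee transfers verbatim to the realized trajectory.

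For part (a), I would first note that the applied correction $v_{0|t}$ is the first entry of the running sequence $\calV_t$, which by the update rule is either a freshly optimized $\calV_t^*$ (case 3.c) or a one-step shift of $\calV_{t-1}$ (cases 3.a, 3.b). Tracing the shifts backwards, let $\tau\le t$ be the most recent execution of case 3.c; then $\calV_t$ is the $(t-\tau)$-fold shift of $\calV_\tau^*$, so $v_{0|t}=v^*_{t-\tau|\tau}$, and $q_t=q_\tau^*$ because the violation level is carried unchanged through the shifts. The key step is a coincidence lemma: since the closed-loop input at each instant $\tau+k$ is exactly $u_{\tau+k}=K_f x_{\tau+k}+v^*_{k|\tau}$, the realized closed-loop trajectory over $[\tau,\tau+N]$ is, for the realized uncertainty $\delta=(\theta,\gamma_\tau,\gamma_{\tau+1},\ldots)\sim\prob$, a single sample path of the open-loop prediction \eqref{eq-xpred} generated by $\calV_\tau^*$. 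Hence the event in part (a) at $t=\tau+k$ is implied by the predicted state and input constraints at horizon indices $k+1$ and $k$; since Proposition~\ref{T:open_loop_stable} certifies, with practical certainty, that the \emph{joint} predicted event over the whole horizon has $\prob$-probability at least $p$ with violation $q_\tau^*=q_t$, the marginal event inherits the same bound. Once the shifted sequence is exhausted and the pure terminal law $u=K_f x$ is applied, constraint satisfaction for all $\delta$ follows from the robust invariance in Assumption~\ref{A:terminal_set}, which is active precisely on the event (of probability at least $p$) that the state has reached $\mathbb{X}_f$.

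For part (b), I would first establish that $z_t$ is nonnegative and nonincreasing. Nonnegativity is immediate, and monotonicity holds case by case: in 3.c, $z_t=z_t^*\le z_{t-1}-\varepsilon d(x_{t-1},\mathbb{X}_f)\le z_{t-1}$; in 3.b, $z_t=\tilde z_t=\max(0,z_{t-1}-d(x_{t-1},\mathbb{X}_f))\le z_{t-1}$; and in 3.a, $z_t=0\le z_{t-1}$. Thus $z_t\downarrow z_\infty\ge0$. The argument then splits on whether case 3.a is ever executed. If it never is, only cases 3.b and 3.c occur; these preserve the invariant $z_t\ge d(x_t,\mathbb{X}_f)$ (true at $t=0$, maintained by the case-3.b test and, for 3.c, by Remark~\ref{R:soft_constraints}), whence $z_{t-1}-z_t\ge\varepsilon d(x_{t-1},\mathbb{X}_f)$ in both branches. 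Telescoping against the convergent $z_t$ gives $\sum_t\varepsilon d(x_t,\mathbb{X}_f)\le z_0-z_\infty<\infty$, so $d(x_t,\mathbb{X}_f)\to0$, which is alternative (i). If instead case 3.a is executed, I would verify that it forces $z_t=0$ and that thereafter the case-3.c test $z_{t+1}^*\le z_t-\varepsilon d(x_t,\mathbb{X}_f)$ cannot hold unless the state already lies in $\mathbb{X}_f$; the algorithm therefore locks into the shift mode and applies the entire tail of the last fresh sequence $\calV_\tau^*$ over the horizon. Setting $t^*=\tau$ and invoking part a) of Proposition~\ref{T:open_loop_stable} for this sequence yields alternative (ii): with practical certainty the state reaches $\mathbb{X}_f$ at the end of the horizon with probability at least $p$ and violation $q_{t^*}$.

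I expect the principal obstacle to be the bookkeeping behind the closed-loop/open-loop coincidence together with the case-3.a logic. One must argue rigorously that, under the shift update, the realized trajectory is genuinely one realization of the horizon-$N$ prediction, so that the per-horizon probabilistic certificate of Proposition~\ref{T:open_loop_stable} transfers to the closed loop with the \emph{same} $\delta$; and one must show that a single occurrence of case 3.a pins $z_t$ at zero, rules out any further fresh acceptance, and thereby guarantees that exactly one fresh sequence governs the terminal phase so that Proposition~\ref{T:open_loop_stable} may be applied to it. By contrast, the monotonicity of $z_t$ and the telescoping estimate are routine once the invariant $z_t\ge d(x_t,\mathbb{X}_f)$ and the per-case decrease $z_{t-1}-z_t\ge\varepsilon d(x_{t-1},\mathbb{X}_f)$ are in hand.
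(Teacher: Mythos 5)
Your proposal is correct and follows essentially the same route as the paper's proof: part (a) by tracing the applied correction back to the most recent scenario solution and invoking Proposition~\ref{T:open_loop_stable}, and part (b) by the dichotomy on whether case 3.a ever fires, with the Lyapunov-type properties of $z_t$ (nonnegativity, the invariant $z_t\geq d(x_t,\mathbb{X}_f)$, and the $\varepsilon$-decrease) giving asymptotic convergence in one branch, and lock-in to the tail of a single scenario-optimal sequence plus Proposition~\ref{T:open_loop_stable} in the other. Your write-up is in places slightly more explicit than the paper's (the telescoping estimate, the closed-loop/open-loop coincidence argument, and the edge case where the shifted sequence is exhausted and the pure terminal law takes over), but these are refinements of the same argument rather than a different approach.
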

\begin{proof}{\bf Preliminaries.}
Notice first  that, for any $t\geq 0$, if $x_t\in\mathbb{X}_f$ then the optimal solution to problem $\calP(x_t,\omega_t)$
is $(\mathcal{V}^*_t,z^*_t,q^*_t)=(0,0,0)$, since the terminal control law (i.e., with $\mathcal{V}_t=0$) is able to keep the predicted state trajectory in the terminal set while satisfying all constraints.
Also, if $z^*_t=0$ is the optimal objective of problem  $\calP(x_t,\omega_t)$, then $x_t\in\mathbb{X}_f$, since $z^*_t$ is an upper bound of $d(x_t,\mathbb{X}_f)$ (see also Remark 3.1), therefore $z^*_t=0\iff x_t\in\mathbb{X}_f.$ Let then $x_t\not \in \mathbb{X}_f$.\\
\textbf{Proof of statement (a).}
At time $t=0$, Proposition~\ref{T:open_loop_stable} guarantees with practical certainty
that the first control correction satisfies the constraints
on $u_0$ and $x_1$  with probability no less than $p$ and constraint violation $q_t=q^*_t$.
At any generic time step $t\geq 1$, the variables $(\tilde{\mathcal{V}}_t,\tilde{z}_t,\tilde{q}_t)$
%
are computed. Then, two cases may occur. If
$z^*_t\leq \left(z_{t-1}-\varepsilon d(x_{t-1},\mathbb{X}_f)\right)$, then case
3.c) is detected, and
the first element $v^*_{0|t}$ of the optimal sequence  $\mathcal{V}^*_t$ is applied to the system.
Being this sequence the solution of a scenario optimization problem, with practical certainty the probability of satisfying state and input constraints is no less than $p$, with constraint violation $q_t=q^*_t$. If, on the other hand,
$z^*_t> \left(z_{t-1}-\varepsilon d(x_{t-1},\mathbb{X}_f)\right)$, then
we are either in case 3.a) or 3.b), and in both cases
the element $v^*_{k|t-k},$ for some $k\in[1,N-1]$, is applied to the system. Being this value part of the solution sequence $\mathcal{V}^*_{t-k}$, with corresponding constraint violation $q^*_{t-k}$, again the probability of satisfying state and input constraints is no less than $p$, with constraint violation $q_t=\tilde{q}_t=q^*_{t-k}$.
Thus, in any case, with practical certainty, at each time step the MPCS algorithm guarantees satisfaction of state and input constraints with probability no less than~$p$ and constraint violation $q_t$.\\
{\bf Proof of statement (b).}
Each run of Algorithm~4.1 may have one of two possible behaviors,
depending on
whether or not  there exists a finite time
$t > 0$ such that $
z_t^* > \left(z_{t-1}-\varepsilon d(x_{t-1},\mathbb{X}_f)\right) \mbox{ and } \tilde{z}_t<d(x_t,\mathbb{X}_f),$ that is, whether or not  the situation in step 3.a is ever satisfied.
We then name  $\calA$
the situation when condition in step 3.a is met at some finite $t>0$, and $\bar \calA$ the
complementary situation when this condition is not satisfied at any finite time, that is when $z_t^* \leq \left(z_{t-1}-\varepsilon d(\xi_{t-1},\mathbb{X}_f)\right) \mbox{ or } \tilde{z}_t\geq d(\xi_t,\mathbb{X}_f)$ holds for all $t> 0$.\\
\textbf{I.} Let us first consider the situation of case $\bar \calA$.
Consider a generic time  $t$.
At step 3) of the MPCS algorithm, if
$z^*_t> \left(z_{t-1}-\varepsilon d(x_{t-1},\mathbb{X}_f)\right)$, then, since it is assumed that we are in situation $\bar\calA$, it must hold that $\tilde{z}_t\geq d(x_t,\mathbb{X}_f)$, thus case 3.b) occurs,
 and the values $\mathcal{V}_t=\tilde{\mathcal{V}}_t$ and $z_t=\tilde{z}_t$ are set.
 Now, recalling that $\tilde{z}_t=\max(0,z_{t-1}-d(x_{t-1},\mathbb{X}_f))$, two cases may occur:
 either $\tilde{z}_t=0$ or $\tilde{z}_t=z_{t-1}-d(x_{t-1},\mathbb{X}_f)>0$. If $\tilde{z}_t=0$, we have $0=\tilde{z}_t\geq d(x_t,\mathbb{X}_f),$ 
 i.e. $d(x_t,\mathbb{X}_f)=0$, which would imply that the terminal set has been reached. Otherwise, if $\tilde{z}_t=z_{t-1}-d(x_{t-1},\mathbb{X}_f)>0$, then we have:
      \begin{equation}\label{E:dist_bnd_6}
      \begin{array}{l}
      z_t = \tilde{z}_t \geq d(x_t,\mathbb{X}_f)\geq0,
      \end{array}
      \end{equation}
      and $z_t-z_{t-1}= \tilde{z}_t-z_{t-1}=z_{t-1}-d(x_{t-1},\mathbb{X}_f)-z_{t-1}=-d(x_{t-1},\mathbb{X}_f)$. Thus,
      \begin{equation}\label{E:delta_z_2}
      \begin{array}{l}
      z_t-z_{t-1}\leq-\varepsilon d(x_{t-1},\mathbb{X}_f),\,\forall x_{t-1}\not\in\mathbb{X}_f,
      \end{array}
      \end{equation}
      \begin{equation}\label{E:delta_z_3}
      \begin{array}{l}
      \text{and }z_t-z_{t-1}=0\iff x_{t-1}\in\mathbb{X}_f.
      \end{array}
      \end{equation}
      On the other hand, if at step 3) of the MPCS algorithm it happens that
      $z^*_t\leq \left(z_{t-1}-\varepsilon d(x_{t-1},\mathbb{X}_f)\right)$, then case 3.c)
      occurs, and the optimal values $\mathcal{V}^*_t$ and $z^*_t$ are retained, i.e.
      $z_t = z^*_t$, $\calV_t = \mathcal{V}^*_t$.
      In this case, it is straightforward to note that equations (\ref{E:dist_bnd_6})-(\ref{E:delta_z_3}) still hold true. The same reasoning can be repeated
      for any time step, as long as the case
      $z_t^* \leq \left(z_{t-1}-\varepsilon d(x_{t-1},\mathbb{X}_f)\right) \mbox{ or }\tilde{z}_t\geq d(x_t,\mathbb{X}_f)$ holds true as assumed, so that we can conclude that the variable $z_t$ enjoys the following properties:
      \begin{equation}\label{E:Lyap_prop}
        \begin{array}{l}
        z_t\geq d(x_{t},\mathbb{X}_f)\geq0,\,\forall t\geq0\\
        z_t=0\iff x_t\in\mathbb{X}_f\\
        z_{t+1}-z_t\leq-\varepsilon d(x_t,\mathbb{X}_f),\,\,\forall x_t\not\in\mathbb{X}_f,\,\forall t\geq0\\
        z_{t+1}-z_t=0\iff x_t\in\mathbb{X}_f.
        \end{array}
        \end{equation}
        Properties (\ref{E:Lyap_prop}) are sufficient to prove convergence of the state $x_t$ to the set $\mathbb{X}_f$:
        \[
        \begin{array}{l}
        0\leq\lim\limits_{t\rightarrow\infty}d(x_{t},\mathbb{X}_f)\leq\lim\limits_{t\rightarrow\infty}z_t=0,        \Rightarrow\lim\limits_{t\rightarrow\infty}d(x_{t},\mathbb{X}_f)=0 .
        \end{array}
        \]
      Therefore, we obtain that in case $\bar\calA$
the MPCS algorithm guarantees that
$\lim\limits_{t\to\infty} d(x_{t},\mathbb{X}_f)=0$.\\
\textbf{II.} Let us next analyze what happens in case  $\calA$.
Let $\bar t > 0$ be the time instant at which the
case $z_t^* > \left(z_{t-1}-\varepsilon d(x_{t-1},\mathbb{X}_f)\right)  \mbox{ and } \tilde{z}_t<d(x_t,\mathbb{X}_f)$ is met for the first time, and let
 $t^*<\tbar$ be the last time at which case
 $z^*_t\leq \left(z_{t-1}-\varepsilon d(x_{t-1},\mathbb{X}_f)\right)$ was satisfied, that is the last time previous to $\tbar$
 when an optimal command sequence was retained, together with its constraint violation $q^*_t$, according to case 3.c) of Algorithm 4.1;
 let
 $\ell = \tbar - t^*\geq 1$.
According to step 3.a) of the MPCS algorithm, we set
\beq
\mathcal{V}_\tbar=\tilde{\mathcal{V}}_\tbar,\,\,z_{\tbar}=0,\,\,q_{\tbar}=\tilde{q}_{\tbar}.
\label{E:dist_bnd_5}
\eeq
Thus, at step 4) of the algorithm, the control move
$u_\tbar =K_fx_\tbar +v_{0|\tbar}$ is applied to the system at time $\bar t$,
where  $v_{0|\tbar} = v^*_{\ell|t^*}$, i.e., $v_{0|\tbar}$ is
the optimal correction predicted for time $t^*+\ell=\tbar$, computed at time $t^*$. At time step $t=\tbar+1$, the state variable $x_{\tbar+1}$ is observed and
 $(\tilde{V}_{\tbar+1},\tilde{z}_{\tbar+1},\tilde{q}_{\tbar+1})$ are computed as $\tilde{z}_{\tbar+1}=\max(0,z_\tbar - d(x_\tbar,\mathbb{X}_f)),\,\tilde{q}_{\tbar+1}=q_\tbar,\,\tilde{\mathcal{V}}_{\tbar+1}=
      \{v_{1:N-1|\tbar}, 0\} =
      \{v_{\ell+1|t^*}^{*},\,
      v_{\ell+2|t^*}^{*},\ldots,v_{N-1 |t^*}^{*},0,\ldots,0\}.$ Since (\ref{E:dist_bnd_5}) holds, it must be $\tilde{z}_{\tbar+1}=0$.
     Then,  $z^*_{\tbar+1}$, $q^*_{\tbar+1}$ and $\calV^*_{\tbar+1}$ are computed at step 2), and we notice that,
     by definition, $z^*_{\tbar+1}\geq 0$.
     Therefore, at step 3) of the algorithm either {\em (i)} case 3.a)
     \{$z_{\tbar+1}^* > \left(z_\tbar-\varepsilon d(x_{\tbar},\mathbb{X}_f)\right)
     \mbox{ and } \tilde{z}_{\tbar+1}<d(x_{\tbar+1},\mathbb{X}_f)$\}
      is detected again, or {\em (ii)}  one of cases 3.b) or 3.c) are detected, which would imply, respectively,
      $0=\tilde z_{\tbar+1}\geq d(x_{\tbar+1},\mathbb{X}_f)$, or $0\leq d(x_{\tbar+1},\mathbb{X}_f)\leq z^*_{\tbar+1}= \tilde z_{\tbar+1}=0$. Hence (in either case) $x_{\tbar+1}\in\mathbb{X}_f$,  
     so that
     convergence to the terminal set would be achieved.
     Consider then case {\em (i)}:
     the values $\mathcal{V}_{\tbar+1}=\tilde{\mathcal{V}}_{\tbar+1}$,
      $z_{\tbar+1}=0$ and $q_{\tbar+1}=q_{\tbar}$ are set in the algorithm, and the control move
      $u_{\tbar+1}=K_fx_{\tbar+1}+v^*_{{\ell+1}|{t^*}}$ is applied to the system.
     Now, the same circumstances actually reproduce for all time steps $t=\tbar+k$, $k\geq 0$, so the algorithm is such that
   the optimal input sequence $\mathcal{V}^*_{t^*}$, computed at time $t^*$ by solving a scenario FHOCP,
   is the one actually next  applied to the system, and the related constraint violation $q^*_t$ is retained for  all $t\geq t^*$.
   Thus, in case  $\calA$, there exists a finite time $t^*$ such that the  sequence
   $\mathcal{V}^*_{t^*}$ is  applied to the system for all subsequent instants $t= t^*+k$, $k=0,\ldots,N-1$. Now, the sequence $\mathcal{V}^*_{t^*}$ is the result of the solution of the scenario-FHOCP
   $\calP(x_{t^*},\omega_{t^*})$, and Proposition~\ref{T:open_loop_stable} states that, with practical certainty,
   we have $R(\omega_{t^*})\geq p$, where $R$ is the reliability defined in
   Section III-A of the paper, which means that $\prob\{\delta: h(s^*_{t^*}, x_{t^*},\delta) \leq 0\} \geq p$. Therefore, in the situation $\calA$, there exists a finite time $t^*$ at which an optimal control sequence
   is computed by solving a scenario-FHOPC and next applied to the actual system for the subsequent $N$ time instants: we can hence claim with practical certainty this sequence will satisfy the problem constraints and reach the terminal set within the time window from $t^*$ to $t^*+N$, with probability at least $p$ and constraint violation $q^*_t$.\end{proof}

\section{Numerical example}\label{S:example}
We consider the system (\ref{E:model}) with: \small$
A(\theta)=
\left[\begin{array}{cc}
1+\theta_1 & \dfrac{1}{1+\theta_1}\\0.1\sin(\theta_4) & 1+\theta_2
\end{array}\right]$, $
B(\theta)=\left[\begin{array}{c}
0.3\arctan(\theta_5)\\\dfrac{1}{1+\theta_3}
\end{array}\right]$, $
B_\gamma=\left[\begin{array}{cc}
1 & 0\\0 & 1
\end{array}\right]$. \normalsize
Each of the random parameters $\theta_1,\theta_2,\theta_3$ is uniformly distributed in the interval [-0.1,\,0.1], while $\theta_4,\theta_5$ are distributed according to Gaussian distributions with zero mean and unit variance. Moreover, the disturbance $\gamma_t\in\mathbb{R}^2$ is computed as follows: \small
\[
\gamma_t=\left\{
\ba{l}
\left[\eta_1,\,\min\left(\eta_2,\frac{1}{3}\left(\frac{1}{100\left(3\eta_1+0.05\right)}-0.05\right)\right)\right]\tran\text{ if }\eta_0\geq0.5\\
\left[0.05\cos\left(\eta_3\right),\,\eta_5\right]\tran \text{ if }\eta_0<0.5\\
\ea
\right.
\]\normalsize
where\small \[\eta_5=\max\left(\min\left(\eta_4\sin\left(\frac{\pi}{4}\right),0.05|\sin\left(\eta_3\right)|\right),
-0.05|\sin\left(\eta_3\right)|\right)
 \]\normalsize and $\eta_0\in[0,1],\,\eta_1\in[0,0.05],\,\eta_2\in[0,0.05],\,\eta_3\in[\frac{3}{4}\pi,\frac{5}{4}\pi],\,\eta_4\in[-0.05,0.05]$ are uniformly distributed random variables. We also consider the following constraints on the input and state variables: $
\mathbb{X}(\theta)\doteq\left\{x\in\mathbb{R}^2:\|x\|_\infty\leq\overline{x}(\theta)\right\},
\quad
\mathbb{U}(\theta)\doteq\left\{u\in\mathbb{R}:|u|\leq\overline{u}(\theta)\right\},
$
 where $\overline{u}(\theta)=5/(1+\theta_6\sin(\theta_7))$, $\overline{x}(\theta)=[10/(1-\theta_6\sin(\theta_7)),\,10/(1+\theta_6\cos(\theta_7))]\tran$, $\theta_6\in[-0.05,0.05]$ is uniformly distributed, finally $\theta_7$ is distributed according to a Gaussian distribution with zero mean and unit variance. Note that the system matrices and constraints are a nonlinear functions of the uncertain parameters, and the disturbance belongs to a non-convex, disconnected set. Hence, no existing technique for robust MPC can be directly applied in this example. By using the results in \cite{Mao03,PNTN06} (see Section \ref{S:problem}), we computed the terminal control law $K_f$ and terminal set $\mathbb{X}_f$ satisfying Assumption \ref{A:terminal_set}: $
 K_f =[-0.4686,\,-1.4221],\quad
 \mathbb{X}_f=\{x\in\bR^2:x\tran Q_f x\leq1\},$ \small$
 Q_f=\left[\begin{array}{cc}
    0.0539 &   0.0724\\
    0.0724 &  0.1724
\end{array}\right]$\normalsize.
We designed the MPCS law with $N=10$ and $\Lambda=1$. We set $\beta=10^{-9}$, and considered different values of $p$.
In order to estimate the probability with which the MPCS control law and the finite horizon sequence $\mathcal{V}^*_0$ satisfy the constraints and drive the state to the terminal set, and to compare it with the bound $p$, we carried out $N_\text{trials}=100,000$ Monte Carlo simulations, starting from the same initial state $x_0=[5,\,2.75]^T$.
We note that this initial condition is not feasible for the deterministic counterpart of the scenario problem, hence for some extractions of $\omega_0$ the constraint violation $q_0^*$ is not negligible. In each one of these simulations,  Algorithm \ref{Alg:MPCS} has been applied and the probability of success $\hat{p}$ has been estimated as $
\hat{p}=\dfrac{N_\text{trials}-N_\text{failures}}{N_\text{trials}}$, where $N_\text{failures}$ is the number of simulations in which some of the constraints were not satisfied. Moreover, the finite horizon solution was also tested, to check the result of Proposition \ref{T:open_loop_stable}. In particular, for the finite horizon sequence $\mathcal{V}^*_0$, the considered constraints were the state and input constraints $\mathbb{X}(\theta),\,\mathbb{U}(\theta)$, and the terminal set constraint $x_{N}\in\mathbb{X}_f$, while for the receding horizon implementation the latter constraint was replaced with $x_{N+10}\in\mathbb{X}_f$, in order to approximately take into account the asymptotic stability result of Theorem \ref{T:closed_loop_stable} (b)-(i). We evaluated these constraints as hard constraints, i.e. with zero constraint violation. The values of $\hat{p}$ for the finite horizon simulations and for the receding horizon ones are indicated as $\hat{p}^\text{FH}$ and $\hat{p}^\text{RH}$, respectively.
\begin{table}[hbt]
  \centering
  \caption{Numerical example. Estimates $\hat{p}$ of the probability of success for different values of $p$, with $\beta=10^{-9}$.}\label{T:sim_priori}
\begin{tabular}{lcc}\hline
   $M (p)$          &22 (0.05)         &42 (0.30)\\\hline
    &$\hat{p}^\text{FH}=0.885,\,\hat{p}^\text{RH}=0.921$       &$\hat{p}^\text{FH}=0.901,\,\hat{p}^\text{RH}=0.943$\\\hline\hline
   $M (p)$          &95 (0.60)     &890 (0.95)    \\\hline
       &$\hat{p}^\text{FH}=0.923,\,\hat{p}^\text{RH}=0.963$
       &$\hat{p}^\text{FH}=0.993,\,\hat{p}^\text{RH}=0.999$  \\\hline
\end{tabular}
\end{table}
The obtained results are reported in Table \ref{T:sim_priori} for values of $p=0.05,\,0.3,\,0.6,\,0.95$. The corresponding values of $M$ are also reported in the Table. It can be noted that in all cases the values of $\hat{p}^\text{FH},\,\hat{p}^\text{RH}$ are higher than the corresponding $p$, in accordance with the theoretical results of Sections \ref{SS:randomized-control}-\ref{S:MPC_RCP_alg}. Moreover, the values of $\hat{p}^\text{FH}$ are lower than those of $\hat{p}^\text{RH}$. The estimated probabilities of success $\hat{p}^\text{FH},\,\hat{p}^\text{RH}$ are quite good already with low values of $M$: in the finite horizon case, the reason of such a good result is mainly the presence of the terminal control law $K_f$, which has already some degree of robustness, while in the receding horizon case, a higher robustness derives from the iterative re-optimization of the corrective control sequence $\mathcal{V}^*_t$. It is worth to notice that the value of $M$ does not depend on the dimension of the state variable or of the uncertainty/disturbance variables;  it only depends on the chosen values of $p,\,\beta$ and on the number of decision variables in the scenario FHOCP, i.e., the number $m$ of inputs multiplied by the control horizon $N$, plus the slack variables $z$ and $q$. However, the number of constraints embedded in $h(s,x,\delta)$ depends linearly on $n$, $m$ and $N$, so that the growth of the overall number of constraints in the scenario problem, for fixed values of $p$ and $\beta$, is $\sim(n\cdot m^2\cdot N^2)$, i.e. quadratic with respect to the control horizon.

\bibliographystyle{IEEEtran}

\end{document}